\documentclass[11pt]{amsart}

\usepackage{cite}
\usepackage[T1]{fontenc}
\usepackage[utf8]{inputenc}
\usepackage{lmodern}
\usepackage[a4paper,margin=38mm]{geometry}
\usepackage{microtype}
\usepackage{amsmath, amssymb, amsfonts,amsthm,amsopn,amscd, mathtools}
\usepackage{graphicx}
\usepackage{xcolor}
\usepackage[mathscr]{eucal}
\usepackage{thm-restate}
\usepackage{enumitem}
\usepackage[colorlinks=true,linkcolor=blue!85!black,citecolor=blue!85!black,urlcolor=blue!85!black]{hyperref}

\newtheorem{theorem}{Theorem}
\newtheorem{definition}[theorem]{Definition}
\newtheorem{lemma}[theorem]{Lemma}
\newtheorem{conjecture}[theorem]{Conjecture}

\newtheorem{corollary}[theorem]{Corollary}
\newtheorem{example}[theorem]{Example}
\newtheorem{question}[theorem]{Question}

\newcommand{\ot}{\otimes}
\newcommand{\Pauli}{\mathcal P}
\newcommand{\tr}{\operatorname{tr}}
\newcommand{\af}{\alpha}
\newcommand{\chif}{\chi_{\mathrm f}}

\newcommand{\AAA}{\mathcal{A}}
\newcommand{\cC}{\overline{C}_7}
\newcommand{\lex}{\operatorname{lex}}

\title[Fractional coloring for triply efficient shadow tomography]
{Counterexamples to the fractional coloring conjecture
for triply efficient shadow tomography}
\author{Jędrzej Stempin$^1$}
\author{Santiago Llorens$^1$}
\author{Felix Huber$^1$}
\date{\today}

\address{$^1$
Division of Quantum Information,
Institute of Informatics,
Faculty of Mathematics, Physics and Informatics,
University of Gdańsk,
Wita Stwosza 57, 80-308 Gdańsk, Poland
}
\email{felix.huber@ug.edu.pl}

\begin{document}

\begin{abstract}
Fractional graph colorings are useful for the Shadow tomography
of Pauli observables.
In practice, it is desirable that any experimentally interesting set of Pauli operators has a small 
fractional chromatic number $\chif$ for its anticommutation graph.
Conjecture 13 in King, Gosset, Kothari, and Babbush 
[PRX Quantum 6, 010336 (2025)]
states that if 
$B_\epsilon(\varrho)$ is the set of Pauli observables having expectation value
magnitude at least $\epsilon$ in some given quantum state $\varrho$, then the fractional
chromatic number of the  
anticommutation graph $G$ 
induced by $B_\epsilon(\varrho)$ is
$O(\epsilon^{-2})$.  
In other words, it asserts that
there exists a constant $C$ such that 
$\chif \cdot \epsilon^2 \leq C$
on all states and graphs.
If the conjecture were true, it would imply 
that there exists a 
triply efficient Pauli shadow tomography algorithm for {\it any} 
subset $S$ of Pauli observables,
provided that there is also an 
efficient fractional coloring algorithm for the set $B_\epsilon$.
Here we show that the conjecture is false by constructing a family of states and observables for which no finite $C$ 
satisfying the bound exists.
We also give a more general construction relying on the commutation index or $\beta$ number of a graph.
The key ingredient in the proofs can be seen as an 
instance of the amplification trick,
where fractional chromatic numbers, $\beta$ numbers, and expectation values are amplified through lexicographic graph products.

\end{abstract}

\thanks{
JS, SL, and FH were funded in whole or in part by the National Science Centre, Poland 2024/54/E/ST2/00451
and by the Polish National Agency for Academic Exchange under the Strategic Partnership Programme grant BNI/PST/2023/1/00013/U/00001.
For the purpose of Open Access,
the author has applied a CC-BY public copyright licence to any Author Accepted Manuscript (AAM) version arising from this submission.\\
AI statement: GPT Sol 5.6 was used to derive the main results (Theorem A and Theorem B) of this paper.
The authors verified and contextualized all results.}

\maketitle

\section{Introduction}

Let $\Pauli_n=\{I,X,Y,Z\}^{\otimes n}$
be the set of $n$-qubit Pauli strings.
The task of {\it Pauli shadow tomography} asks for simultaneous estimates of $\tr(\varrho P)$ for
the Pauli strings $P$ in a given set $ S\subseteq\Pauli_n$, using as few copies
of the unknown state $\varrho$ as possible.
A simple approach to tomography is to group the observables into disjoint sets of mutually commuting Pauli strings, 
which can then be  measured in common Clifford bases.  
Clearly, a larger number of disjoint sets requires more samples.
To understand more elaborate strategies, consider the anticommutation graph, whose vertices are connected if the corresponding Pauli strings anticommute.  Fractional coloring then allows one to measure using
distributions over commuting families instead of a single partition. 

More precisely, the task of shadow tomography is:
\begin{definition}[Shadow tomography~\cite{KGKB25}]
The shadow tomography task for a set $S$ of observables is as follows. We are given copies of an unknown state $\varrho$, and our goal is to output estimates $y_P$ such that (with high probability)
we have $|y_P - \tr(P\varrho)|\leq \epsilon$ for all $P \in S$.
\end{definition}

Let us describe two results by King, Gosset, Kothari, and Babbush~\cite{KGKB25}.
In both cases, one assumes that the fractional coloring of the corresponding anticommutation graph 
can be sampled with a classical randomized algorithm with runtime $T$.

The {\it single-copy} protocol learns a Pauli set $S \subseteq \Pauli_n$ with Clifford measurements with high probability within error $\epsilon$ 
with complexity~\cite[Theorem~5]{KGKB25}:
\begin{align}
    \text{samples:} \quad
        &O\big(\chif \log(|S|) / \epsilon^2\big)\,,\nonumber \\
    \text{runtime:} \quad
        &O\big((T + n^3) \chif (\log|S|) / \epsilon^2\big)\,.
\end{align}
Here $\chif$ is the fractional chromatic number of the  anticommutation graph of $S$.
Note that the lower bound on sample complexity is   
$\Omega\big(\chif / \epsilon^2\big)$ samples~\cite{Chen24}.

The {\it two-copy} protocol first uses Bell measurements to identify the set $S_\epsilon$ of Pauli strings with large expectation values, 
after which partitions arising from a fractional coloring of the anticommutation graph $G({S_\epsilon})$
are used to measure them. 
The complexity to learn $S$ to high probability with error $\epsilon$ is \cite[Lemma~20]{KGKB25}:
\begin{align}
\text{samples:}\quad 
    &O\big(\log(|S|)/\epsilon^4 + \chif\log(|S|)/\epsilon^2\big)\,,\nonumber\\
\text{runtime:}\quad 
    &O\big(|S|\log(|S|) / \epsilon^4 + (T+n^3)\chif \log(|S|) / \epsilon^2\big)\,.
\end{align}
In this case $\chif$ is the fractional chromatic number of the anticommutation graph induced by $S_\epsilon$ (not $S$).

One sees that for both the single- and two-copy protocols, 
the fractional chromatic number directly influences the sample and time complexities.

\bigskip
As one requires to learn Pauli expectations to within error $\epsilon$, define for any state $\varrho$ the set 
\begin{equation}\label{eq:B-def}
  B_\epsilon(\varrho)
  :=\{P\in\Pauli_n:|\tr(\varrho P)|\geq\epsilon\}.
\end{equation}

A key question is whether a large $\epsilon$ forces the anticommutation graph of $B_\epsilon(\varrho)$ to have a small fractional chromatic number~\footnote{In principle, one could distinguish the expectation-value threshold from the target accuracy $\epsilon$. Since the two are proportional in the shadow-tomography application, we use $\epsilon$ for both throughout.}. 
In particular, the following conjecture asserts that
$\chif(B_{\epsilon}(\varrho)) = O(1/\epsilon^2)$.
If true, the sample complexity of the two-copy protocol would collapse 
to $O(\log|S| / \epsilon^4)$, termed {\it sample-efficient}
in Ref.~\cite{KGKB25}~\footnote{In their work, any algorithm with sample complexity 
$\operatorname{poly}(\log|S|, \epsilon^{-1})$ is considered sample-efficient.}.

\bigskip
We state the conjecture by 
King, Gosset, Kothari, and Babbush verbatim~\footnote{Note that 
Ref.~\cite{KGKB25}
refers to commutation graphs, while we call the same objects anticommutation graphs.}.

\begin{conjecture}[Conjecture 13 in Ref.~\cite{KGKB25}]
\label{conj:13}
Let $\varrho$ be an $n$-qubit state, $\epsilon \in (0, 1)$, and let $B_{\epsilon} \subseteq \Pauli_n$ be the set of all
Paulis $P$ such that $| \tr(\varrho P )| \geq \epsilon$. 
There is a fractional coloring of the commutation graph
$G(B_\epsilon)$ of size $O(1/\epsilon^2)$.
\end{conjecture}

Rephrased as an inequality, Conjecture~\ref{conj:13} asserts:
there exists a constant $C$, such that for every state $\varrho$ and error $\epsilon$
\begin{equation}\label{eq:conjecture}
  \chif(B_\epsilon(\varrho)) \cdot \epsilon^{2} \leq C\,,
\end{equation}
where $\chif$ is the fractional chromatic number of the 
anticommutation graph of $B_\epsilon(\varrho)$.

In this paper we provide both a specific construction and a generic graph property that shows that such $C$ does not exist.

\subsection{Related works}

The general task of shadow tomography, introduced by Aaronson~\cite{aaronson_shadow_2018}, asks how many copies of an unknown state are required to estimate the outcome probabilities of \emph{any} set of two-outcome observables (i.e., not only Paulis). 
Among subsequent developments in the general shadow tomography problem~\cite{aaronson_gentle_2019,badescu_improved_2021}, recent work has obtained dimension-independent bounds~\cite{jeronimo_dimension-free_2026}.

A related but slightly different framework is that of classical shadows~\cite{huang_predicting_2020}. 
Here randomized measurements are used to construct a classical description of the state, 
from which one is able to reconstruct the expectation values of 
many (typically low-weight) observables.

Shadow tomography has also been studied in more structured settings, including thermal states with an accessible Hamiltonian~\cite{chen_efficient_2026} and Pauli and fermionic observables~\cite{KGKB25,Chen24}.

Note that our results concern the shadow tomography setting where one wants to measure a set of Pauli strings using few (i.e. a constant) number of copies.

\section{Contributions}
The aim of this paper is to show that Conjecture~\ref{conj:13} 
is false, equivalently, that an inequality as in Eq.~\eqref{eq:conjecture} cannot hold.

\begin{restatable}{thmA}{thmAinformal}\label{thm:thmA}
    There exists a family of states $\varrho_m$ and observables $\AAA^m$ violating Conjecture~\ref{conj:13} with $\chif(B_{\epsilon_m})$ 
    scaling as
    $\Omega(\epsilon^{-2.07598})$. 
\end{restatable}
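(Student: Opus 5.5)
The plan is to take a small "seed" configuration that already violates the bound by a constant factor, and amplify it with lexicographic graph products. Amplification multiplies fractional chromatic numbers while expectation values only multiply, which drives $\chif\cdot\epsilon^2$ to infinity.

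For the seed I will use the complement $\cC$ of the $7$-cycle, for which $\chif(\cC)=7/3$. I would first exhibit Pauli strings $A_1,\dots,A_7$ on a few qubits whose anticommutation graph is exactly $\cC$. This is possible since every graph is realizable as an anticommutation graph by assigning symplectic vectors. I would then find a state $\varrho_1$ with $|\tr(\varrho_1 A_i)|\ge \epsilon_0$ for all $i$, with $\epsilon_0$ as large as possible. The natural choice is to maximize the minimum expectation value, for instance by symmetrizing under the $\mathbb Z_7$ rotation of $\cC$ and taking the top eigenvector of $\sum_i A_i$. The claimed exponent fixes the target value: $\log(7/3)/\log(1/\epsilon_0)=2.07598$, i.e.\ $\epsilon_0^2\approx 0.442$. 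I expect this optimization, and proving it attains that value, to be the main computational obstacle. I would attempt it via the symmetric ansatz, reducing to a small eigenvalue problem, and check the result numerically.

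Next comes the Pauli realization of the lexicographic product, which is the key structural step. Given Paulis $\{P_v\}$ realizing $G$ with state $\varrho$, and $\{Q_w\}$ realizing $H$ with state $\sigma$, I would introduce one fresh copy of the $H$-register for each vertex $v$. I would then set $A_{v,w}=P_v\otimes Q_w^{(v)}$, where $Q_w^{(v)}$ acts on copy $v$ and as identity on the other copies. For $v\neq v'$ the second factors act on disjoint registers and commute, so $A_{v,w}$ and $A_{v',w'}$ anticommute iff $P_v,P_{v'}$ do. For $v=v'$ they anticommute iff $Q_w,Q_{w'}$ do. This is precisely adjacency in $G[H]$. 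With state $\varrho\otimes\sigma^{\otimes |V(G)|}$, the expectation value factorizes as $\tr(\varrho P_v)\tr(\sigma Q_w)$. Iterating $m$ times from the seed gives observables $\AAA^m$ realizing the lexicographic power $\cC^{[m]}$, with all expectations of magnitude at least $\epsilon_m=\epsilon_0^m$.

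Finally I would invoke the multiplicativity $\chif(G[H])=\chif(G)\chif(H)$ for lexicographic products, a standard fact (Scheinerman–Ullman). Its proof pairs an optimal fractional clique of $G$ with one of $H$, using LP duality, and similarly pairs fractional colorings. This gives $\chif(\cC^{[m]})=(7/3)^m$. Since $\AAA^m\subseteq B_{\epsilon_m}(\varrho_m)$ and $\chif$ is monotone under induced subgraphs, we get $\chif(B_{\epsilon_m})\ge (7/3)^m=\epsilon_m^{-\log(7/3)/\log(1/\epsilon_0)}=\epsilon_m^{-2.07598}$. Because the exponent exceeds $2$, $\chif\cdot\epsilon_m^2\to\infty$, so no constant $C$ can satisfy the conjectured bound \eqref{eq:conjecture}.
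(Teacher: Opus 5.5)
Your architecture matches the paper's. You realize $\cC$ by Pauli strings, pick a seed state, and amplify via the one-register-per-vertex realization of lexicographic powers with the product state $\varrho\otimes\sigma^{\otimes|V(G)|}$. You then conclude with multiplicativity and induced-subgraph monotonicity of $\chif$.

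The gap is in the seed, and as written it is fatal: $\chif(\cC)=7/2$, not $7/3$. The graph $\cC$ is vertex-transitive with $\alpha(\cC)=\omega(C_7)=2$; the value $7/3$ is $\chif(C_7)$. You reverse-engineered $\epsilon_0$ from the exponent using $7/3$, and the resulting target $\epsilon_0^2\approx 0.442$ cannot be attained by any state. If $|\tr(\varrho A_i)|\geq\epsilon_0$ for all seven $A_i$, then $7\epsilon_0^2\leq\sum_i\tr(\varrho A_i)^2\leq\beta(\cC)\leq\vartheta(\cC)=1+1/\cos(\pi/7)\approx 2.11$, so $\epsilon_0^2\lesssim 0.301$. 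Worse, with $\chif=7/3$ every possible seed would satisfy $\chif\,\epsilon_0^2\leq\vartheta(\cC)/3\approx 0.70<1$. Lexicographic amplification would then drive $\chif\cdot\epsilon_m^2$ to $0$ rather than $\infty$. The optimization you flag as the main obstacle would therefore fail to produce any counterexample.

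With the correct value the argument closes, with almost no room to spare. For the paper's realization, the ground state of $\sum_{A\in\AAA}A$ has energy $-(1+2\sqrt2)$; this is your top eigenvector after flipping all signs, since the $-A_i$ realize the same graph. It gives all seven observables expectation magnitude $a=(1+2\sqrt2)/7$, so $a^2\approx 0.299$, just below the ceiling $0.301$. Then $\chif(\cC)\,a^2=(9+4\sqrt2)/14\approx 1.047>1$, and the exponent is $\log(7/2)/\log(1/a)\approx 2.07598$, matching the statement.

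For a vertex-transitive seed, $\chif\,\epsilon_0^2>1$ is equivalent to $|V|\epsilon_0^2>\alpha$. Since $|V|\epsilon_0^2\leq\beta$, this requires $\beta>\alpha$. That is why the seed works at all, and it is the mechanism behind the paper's Theorem~\ref{thm:general-amplification}. With $7/2$ in place of $7/3$ and the explicit $a$, the rest of your proof goes through unchanged.
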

This is proven in Theorem~\ref{thm:A_details} and 
Corollary~\ref{cor:scaling}
in Section~\ref{Sec:proof_main}. 
It relies on the amplification~\cite{Tao07} of certain graph properties through the lexicographic product.

\bigskip
A more general statement can be made using the machinery of $\beta$ numbers, a graph invariant that is related to quantum properties.
It is compared against the independence number $\alpha$,
that is the size of the largest subset of disconnected vertices.

\begin{restatable}{thmA}{thmBinformal}
    \label{thm:thmB}
Let $G$ be a graph with $\alpha(G) < \beta(G)$.
Then $G$ provides a counterexample to Conjecture~\ref{conj:13}.
\end{restatable}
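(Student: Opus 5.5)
The plan is to amplify the gap $\alpha(G)<\beta(G)$ by taking lexicographic powers $G^{[m]}=G[G[\cdots]]$, and to use the elementary bound
\[
\chif(H)\ \ge\ \frac{|V(H)|}{\alpha(H)}
\]
on the induced subgraph $H=G(B_\epsilon)$. I use $\beta(G)$ in the following sense: $\beta(G)=\sup \sum_{v}\tr(\varrho A_v)^2$, the supremum taken over states $\varrho$ and families of involutions $A_v$ that anticommute on edges of $G$ and commute on non-edges. As a first step I would show that this supremum can be approached using Pauli strings $P_v\in\Pauli_n$. The route is to decompose into Clifford-algebra type representations and realise each of them by Jordan--Wigner Paulis. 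If that fails, I would instead work directly with the Pauli/commutation-index version of $\beta$ that the paper uses.

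\textbf{Step 1: lexicographic amplification.} Let $G$ be realised by Paulis $P_u$ with a state $\varrho$ and expectations $x_u=\tr(\varrho P_u)$, and let $H$ be realised by $Q_v$ with a state $\sigma$ and expectations $y_v$. Introduce one register for $G$ and one separate block of registers $R_u$ for each vertex $u\in V(G)$. Map the vertex $(u,v)$ of $G[H]$ to $P_u\otimes Q_v^{(R_u)}$, meaning $Q_v$ acts on block $R_u$ and the identity acts on all other blocks.

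If $u\neq u'$, the block parts act on disjoint blocks and therefore commute. The pair then anticommutes iff $u\sim u'$. If $u=u'$, the pair anticommutes iff $v\sim v'$. This is exactly the adjacency of $G[H]$.

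With the state $\varrho\otimes\bigotimes_u\sigma$, the expectation of $(u,v)$ is $x_uy_v$. Hence
\[
\sum_{(u,v)}(x_uy_v)^2=\Big(\sum_u x_u^2\Big)\Big(\sum_v y_v^2\Big),
\]
and therefore $\beta(G[H])\ge\beta(G)\beta(H)$. Iterating gives, for $G^{[m]}$, a Pauli realisation and a state with $\sum_{\mathbf u}x_{\mathbf u}^2\ge(\beta(G)-\delta)^m$, where $x_{\mathbf u}=\prod_i x_{u_i}$. At the same time $\alpha(G^{[m]})=\alpha(G)^m$, by the standard multiplicativity of the independence number under the lexicographic product.

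\textbf{Step 2: choosing $\epsilon$ by a type-class argument.} The expectations $x_{\mathbf u}$ are products, so $|x_{\mathbf u}|$ depends only on the type of $\mathbf u$, i.e.\ how many times each $u\in V(G)$ occurs. There are at most $(m+1)^{|V(G)|}$ types. Hence some type class $T$ carries at least $(\beta-\delta)^m/(m+1)^{|V(G)|}$ of the total squared mass, and all its members share a common value $|x_{\mathbf u}|=\epsilon_m$. Set $\epsilon=\epsilon_m$. Then $B_\epsilon$ contains $T$, and $|T|\,\epsilon^2$ equals the mass carried by $T$.

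Since $\chif$ is monotone under induced subgraphs and $\alpha$ of any subgraph is at most $\alpha(G)^m$, we get
\[
\chif(B_\epsilon)\,\epsilon^2\ \ge\ \frac{|T|\,\epsilon^2}{\alpha(G)^m}\ \ge\ \frac{1}{(m+1)^{|V(G)|}}\Big(\frac{\beta(G)-\delta}{\alpha(G)}\Big)^m .
\]
Choosing $\delta$ with $\beta-\delta>\alpha$, this tends to infinity with $m$. No constant $C$ as in Eq.~\eqref{eq:conjecture} can therefore exist.

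One should also check that $\epsilon_m\in(0,1)$. It is positive because the maximising type has nonzero mass, and $\epsilon_m<1$ can be arranged since only finitely many vertices have $|x_u|=1$. If $\epsilon_m\to0$ is needed for the $O(\cdot)$ statement, this also holds.

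\textbf{Main obstacle.} I expect the main obstacle to be the Pauli realisability of $\beta$ in Step 0. The conjecture concerns Pauli strings specifically, whereas $\beta$ is defined over arbitrary anticommutation representations. The fallback is the Clifford-algebra argument: every irreducible representation of the relations "$A_v^2=1$, with $A_v,A_w$ commuting or anticommuting according to $G$" is a twisted group-algebra representation of $\mathbb Z_2^{|V|}$ with a fixed cocycle. Such representations are unitarily equivalent to Pauli strings on suitably many qubits, up to signs, and signs do not affect the squared expectations. The sup is then attained by some state in one such block.
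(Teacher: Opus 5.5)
Your argument is correct, but the selection step differs from the paper's.

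\textbf{The paper's route.} The paper fixes an abstract maximizer $\sigma_m$ of $\beta(G^{\lex m})$, citing multiplicativity $\beta(G^{\lex m})=\beta(G)^m$ from the literature. Since nothing is known about the individual values $\tr(A\sigma_m)^2$, it chooses the threshold by the harmonic-sum pigeonhole of Lemma~\ref{lem:threshold-selection}. Some $t_m$ satisfies $t_m|Y_{t_m}|\geq\beta(G)^m/h_{N^m}$, and one sets $\epsilon_m=\sqrt{0.99\,t_m}$, losing only a factor $1+m\log N$.

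\textbf{Your route.} You prove the easy inequality $\beta(G\lex H)\geq\beta(G)\beta(H)$ with an explicit product state, and this direction is all that either argument actually uses. You then exploit the product structure $|x_{\mathbf u}|=\prod_i|x_{u_i}|$: grouping by type gives a set of observables with \emph{exactly} equal expectation magnitude and total mass at least $\beta(G)^m/(m+1)^{|V(G)|}$.

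\textbf{Comparison.} Your route is more self-contained and explicit. The states are tensor powers of a one-copy maximizer, in the spirit of Theorem~A, and you never need the harder upper bound in the multiplicativity of $\beta$. The paper's lemma is more robust: it applies to any state with large total squared mass, needs no product structure, and loses a logarithmic rather than polynomial factor. Both losses are subexponential, so either suffices.

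\textbf{Two small remarks.} First, your Step~0 is unnecessary. The paper defines $\beta$ directly through a Pauli realization (realization-independent by the cited result), which is exactly your fallback.

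Second, your justification of $\epsilon_m<1$ (``only finitely many vertices have $|x_u|=1$'') is not an argument as stated. There are two easy fixes:
\begin{enumerate}
\item Use $0.99\,\epsilon_m$ as the paper does. Since $B_{0.99\epsilon_m}\supseteq T$, this costs only a factor $0.99^2$.
\item Note that two anticommuting Paulis cannot both have expectation $\pm1$. So a class with $\epsilon_m=1$ is independent, giving $|T|\leq\alpha(G)^m$. This is incompatible with $|T|\geq\beta(G)^m/(m+1)^{|V(G)|}$ once $m$ is large.
\end{enumerate}
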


In particular, graphs with this property exist. For example, 
$\overline C_7$ satisfies the hypothesis $\alpha < \beta$ of Theorem~B, so the counterexample of Section~\ref{Sec:proof_main} can be seen as an instance of Theorem~\ref{thm:general-amplification} but with a simpler proof.

Theorem~\ref{thm:thmB} is proven via Theorem~\ref{thm:general-amplification} in Section~\ref{sec:beta-number-amplification}.
It also relies on an amplification trick,
together with an inequality for partial sums.

\section{Proof sketch}

We sketch the proof of Theorem~\ref{thm:thmA}. 
Our second main result, Theorem~\ref{thm:thmB}, 
has a similar flavour; 
however, it requires the introduction of slightly more machinery.
In what follows we drop $\varrho$ from $B_\delta(\varrho)$ 
when the state in question is clear, and likewise for other sets defined by a state.

We provide a counterexample that violates Inequality~\eqref{eq:conjecture} via the following construction: 
Consider the anti-heptagon $\cC$,
and denote by $\AAA = \{A_1, \dots, A_7\}$ seven Pauli strings which realize it as an anticommutation graph.  
Consider $H = \sum_{A \in \AAA} A$ and denote its ground state by $\varrho$.
One can check that every operator $A \in \AAA$ 
has the same expectation value, namely
\begin{equation}
  |\tr(\varrho A)|=a\,, \quad \text{for all} \quad A \in \AAA\,, \qquad \text{where} 
  \qquad
  a= \frac{1+2\sqrt2}{7}\,.
\end{equation}
Since the fractional chromatic number of the anti-heptagon is $\chif(\cC)=7/2$, 
it follows that
\begin{equation}\label{eq:geqone}
  \chif(\cC) \cdot a^2
  =\frac{9+4\sqrt2}{14} \approx 1.046918 >1.
\end{equation}

The key idea is to amplify this inequality through 
lexicographic graph products (illustrated in Fig.~\ref{fig:lex}), 
so that the right hand side of Eq.~\eqref{eq:geqone} becomes unbounded.
To this end, take the $m$-fold product of~$\cC$, yielding the graph 
$\cC^{ \lex m}$.
Corresponding to it are a state $\varrho_m$ and a set $\AAA^{m}$ containing $7^m$ Pauli observables,
which are constructed in more detail in Section~\ref{Sec:proof_main}. 

The amplification trick uses that 
\begin{align}
     \chif(G^{\operatorname{lex} m}) &= \chif(G)^m\,, \nonumber\\
     |\tr\big(\varrho_m A\big)| &= a^m 
     \quad\quad \text{for all}\quad A \in \AAA^{m}\,.
\end{align}
The first property follows from the well-known fact $\chif(G \operatorname{lex} H) = \chif(G) \cdot \chif(H)$ \cite[Corollary~3.4.5]{SU11}. We show the second property in Section~\ref{Sec:proof_main}.

Now write $G(S)$ for the anticommutation graph induced 
by the set $S$.
Also, from now on use a shorthand $\chif(S) = \chif(G(S))$ for any observable set $S$.
Set $\epsilon_m=a^m$. 
Note that 
for the state $\varrho_m$,
$G(\AAA^{m})$ is an induced subgraph of  $G(B_{\epsilon_m})$.

The fractional chromatic number is monotone under induced
subgraphs, and $\chif(\AAA^{m}) \leq \chif(B_{\epsilon_m})$ holds.

This yields the following inequality
\begin{equation}
    \chif(B_{\epsilon_m}) \cdot \epsilon_m^2 
    \quad\geq  \quad
    \chif(\AAA^m) \cdot \epsilon_m^2 
    \quad = \quad
    \left(\frac{9+4\sqrt2}{14}\right)^m \approx (1.046918)^m \,.
\end{equation}

The right hand side is unbounded under repeated lexicographic graph products, i.e. when $m \longrightarrow \infty$.
Hence the inequality~\eqref{eq:conjecture} cannot hold, and Conjecture~\ref{conj:13} is false.

\section{Concepts}

\subsection{Anticommutation graphs}

Consider a set $S\subseteq\Pauli_n$ of Pauli strings. 
Its \emph{anticommutation graph} $G(S)$ has as vertices the elements of $S$, with edges between them if the corresponding Pauli strings anticommute, 
\begin{equation}
    i \sim j \quad \text{if} \quad A_i A_j = - A_j A_i\,.
\end{equation}
Naturally, an independent set in $G(S)$ corresponds to a pairwise commuting family of Paulis. 

We call a set of ($\pm$ signed) Pauli strings $S$ a {\it realization} of graph $G$, if the anticommutation graph of $S$ equals $G$.
Every graph $G$ allows for a realization
in terms of Pauli strings \cite{gastineau1982quasi}.

\subsection{Independence, chromatic, and $\beta$ numbers}
\label{sec:chromatic_number}
Denote the maximum size of an independent set, 
the independence number, as $\af(G)$.
Let $\mathcal I(G)$ denote the set of independent sets of a graph $G$.
Denote its independence number by
\begin{equation}
    \alpha(G)=\max_{I\in\mathcal I(G)} |I|\,.
\end{equation}
The \emph{chromatic number} $\chi(G)$ is the minimum number of colors that can be assigned to the vertices such that no two adjacent vertices share a color.

The \emph{fractional chromatic number} is a linear-programming relaxation of the chromatic number,
\begin{equation}\label{eq:chif-def}
 \chif(G) =\min\left\{
   \sum_{I\in\mathcal I(G)}w_I:
   w_I\geq0,\quad
   \sum_{I \, : \, v\in I}w_I\geq1\ \text{for every }v\in V(G)
 \right\}.
\end{equation}
Here each independent set receives a nonnegative weight and every vertex must receive total weight at least one.

Finally, for $\AAA$ a realization of $G$, define the
$\beta$ number as
\begin{equation}
\label{eq:beta-number}
    \beta(G) =\max_{\varrho}\sum_{A\in\AAA}
    \tr(\varrho A)^2\,.
\end{equation}
In particular, $\beta$ is independent of the particular Pauli string realization~\cite[Section~IV,~Theorem~3]{XSW24}.

We will use the following four facts. These are all standard (see e.g.~\cite{godsil2001algebraic, SU11}) 
apart from the monotonicity and multiplicativity of $\beta$~\cite{XSW24}
and Item~\ref{item:fact-d}~\cite{hastings2022optimizing}.

\begin{enumerate}[label=(\alph*), itemsep = 0.4em]
    \item\label{item:fact-a}
    Let $H$ be an induced
subgraph of $G$. Then
\begin{equation}
\label{eq:chi_monotone}
 \alpha(H)\leq\alpha(G)\,, \quad \chif(H)\leq\chif(G)\,, \quad \beta(H)\leq\beta(G)\,.
\end{equation}

\item\label{item:fact-b} $\alpha$, $\chif$, $\beta$ are multiplicative under the lexicographic product (cf. the next section),
\begin{align}
\label{eq:lex_product_multi}
    \alpha(G \lex H) &= \alpha(G) \cdot \alpha(H)\,,\nonumber \\
    \chif(G \lex H) &= \chif(G) \cdot \chif(H)\,,\nonumber \\
    \beta(G \lex H) &= \beta(G) \cdot \beta(H)\,.
\end{align}
\item\label{item:fact-c} For all graphs,
\begin{equation}
\label{eq:vertex-transitive-chif}
    \chif(G) \geq \frac{|V(G)|}{\alpha(G)} \,.
\end{equation}
Equality holds if $G$ is vertex-transitive.

\item\label{item:fact-d} For all graphs, $\alpha(G) \leq \beta(G)$.

\end{enumerate}
Hereafter, we simply use $|V|$ for $|V(G)|$ when the context is clear.

\subsection{Lexicographic graph products}
\begin{figure}[tbp]
    \includegraphics[width = 0.6\textwidth]{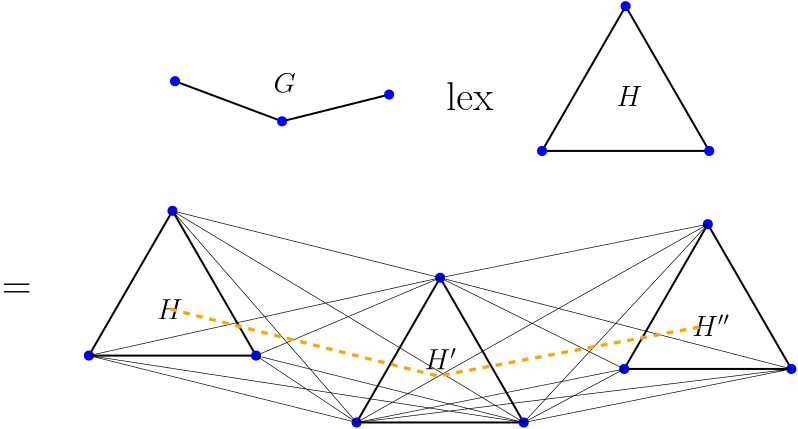}
    \caption{The lexicographic product  of the line graph with the triangle, $L_3 \lex C_3$.
    At every vertex of $G$ (a branch), 
    place a copy of $H$ (a leaf).
    Then connect the vertices 
    among different leaves 
    if and only if their branches are connected.
    \label{fig:lex}}
\end{figure}

Let $G = (V,E)$ and $H = (W,F)$. 
The \emph{lexicographic product} $G \lex H$ has vertex set
$V\times W$ and edges
\begin{equation}\label{eq:lex-adjacency}
  (i,j)\sim(k,l)
  \quad\quad \text{if} \quad\quad
  \begin{cases}
      i \sim k\ \text{or}\\
      i=k \text{ and } j \sim l
  \end{cases}
\end{equation}
The idea is simple: at each vertex of $G$ (branch) 
place a copy of $H$ (a leaf). 
Then connect the vertices among different leaves if and only if their branches are connected.
This is illustrated in Fig.~\ref{fig:lex}.
We write $G^{\lex 1} = G$ and $G^{\lex m+1} = (G^{\lex m} \lex G)$ for the lexicographic powers.

\subsection{Realizing observables for lex products}\label{sec:lex_prod_obs}
Given two observable sets $S$ and $T$ 
with anticommutativity graphs $G$ and $H$,
how can the $G \lex H$ be realized?
We give a small example:

\begin{example}
\label{lexicographic_example}
Let $G = L_3$ and $H = C_3$. 
The graph $G$ is realized by $T = \{XI, ZZ, IX\}$,
and 
$H$ is realized by $R = \{X,Y,Z\}$.
A set realizing $(G \lex H)$ is
\begin{align}
    \text{leaf $H$:}   \quad\quad &\mathrm{XI\,|\,XII}\,,\quad \mathrm{XI\,|\,YII}\,,\quad  \mathrm{XI\,|\,ZII}\,, \nonumber\\ 
    \text{leaf $H'$:}  \quad\quad &\mathrm{ZZ\,|\,IXI}\,,\quad \mathrm{ZZ\,|\,IYI}\,,\quad  \mathrm{ZZ\,|\,IZI}\,,\nonumber\\
    \text{leaf $H''$:} \quad\quad &\mathrm{IX\,|\,IIX}\,,\quad \mathrm{IX\,|\,IIY}\,,\quad  \mathrm{IX\,|\,IIZ}\,.
\end{align}
Here the first register is occupied by the observables $T$ of $G$.
The second register hosts the observables $R$ of $H$, 
with a separate tensor factor (three of them in total) 
reserved for each copy of $H$.
\end{example}

We now follow with a generic construction~\cite{XSW24}. 
Let $T = \{T_1, \dots, T_t\}$ and $R = \{R_1, \dots, R_r\}$ realize $G$ and $H$ respectively.
Then the following set realizes $G \lex H$ as anticommutation graph,
\begin{equation}
\label{eq:lex_operators}
    T \lex R \,\,=\,\, \Big\{ 
    \overbracket{T_i}^{\text{Reg. $1$}} 
    \ot 
    \,\,
    \overbracket{
    I \ot \cdots \ot I \ot \hspace{-0.7cm}
    \underbracket{R_j}_{\text{$i$'th tensor factor}} 
    \hspace{-0.6cm} \ot\, I\ot  \cdots \ot I
    }^{\text{Reg. $2$}}
    \,\,\Big|\, i=1,\dots, t\,;\,\, j=1, \dots, r\Big\}\,.
\end{equation}
Here $T_i$ is placed on the first register, 
and $R_j$ placed onto the $i$-th tensor factor of the second register,
where the second register contains $|V(G)|$ internal tensor factors.
In the following we denote by $T^{\lex m} = (T \lex \dots \lex T)$
($m$ times) the $m$-fold lex product of the observable set~$T$.

\section{Anti-heptagon as a counterexample} \label{Sec:proof_main}

We now construct the set of states and Pauli strings that serve as counterexample to Conjecture~\ref{conj:13}.
Consider the following set that realizes the anti-heptagon~$\cC$
(see Fig.~\ref{fig:c7-graphs}),
\begin{align}
\label{eq:pauli_set1}
\mathcal{A} = \big\{
\mathrm{ZZI}, 
\mathrm{ZII}, 
\mathrm{IXI}, 
\mathrm{XII}, 
\mathrm{XZX}, 
\mathrm{YZZ}, 
\mathrm{YYY}
\big\}\,.
\end{align}

\begin{figure}[tbp]
\centering
\includegraphics{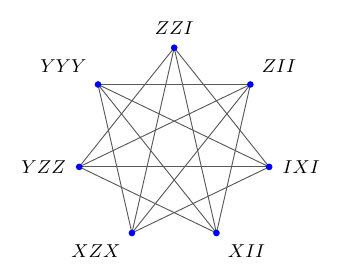}
\caption{Anticommutation graph of $\cC$ and its Pauli realization.
\label{fig:c7-graphs}}
\end{figure}

Consider the Hamiltonian  
$H =  \sum_{A \in \mathcal{A}}A$
and let $\varrho_{\min}$ be its ground state (see Appendix~\ref{app:seed}).
One can check that the expectation values are equal for all observables $A\in \AAA$
\begin{equation}\label{eq:value_a}
    |\tr(A\varrho_{\min})| = \frac{(1 + 2\sqrt{2})}{7} =: a\,.
\end{equation}

We also note that $\chif(\cC) = \tfrac{7}{2}$, which follows from the fact that $\cC$ is vertex transitive and $\alpha(\cC) = 2$.

\subsection{Amplified observables and state}
\label{sec:amplification_constr}
As amplified set of observables, we 
take the observables that realize the $m$-fold lex product $\cC^{\,\lex m}$,
namely
\begin{align}\label{eq:amp_obs}
    \AAA^m = \AAA^{\lex m} = (\AAA \lex \dots \lex \AAA) \quad (m \text{ times})\,.
\end{align}
As amplified state, we take 
\begin{align}\label{eq:amp_state}
    \varrho_m = \varrho_{\min}^{\ot K}\,,
\end{align}
where $K$ is the total number of tensor factors required for the construction of $\AAA^m$.
Due to the fact that on each register there is only one non-trivial tensor factor in the state, it follows that
\begin{align}
    |\tr(\varrho_m A)| = a^m \quad \text{for all}\quad A \in \AAA^m\,.
\end{align}

\subsection{Counterexample}
\label{sec:counterexample}
We are ready to state the counterexample.

\begin{theorem}\label{thm:A_details}
The anti-heptagon provides a family of states $\varrho_m$ and observables $\AAA^m$ violating Conjecture~\ref{conj:13}. 
\end{theorem}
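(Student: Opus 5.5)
The plan is to combine three ingredients that are already in place: the seed computation on the anti-heptagon, the multiplicativity of $\chif$ under lexicographic products, and the multiplicativity of expectation values for the amplified state. From these we show that $\chif(B_{\epsilon_m}(\varrho_m))\cdot\epsilon_m^2$ is unbounded in $m$.

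\textbf{Seed computation.} First I would fix the seed data. The set $\AAA$ in Eq.~\eqref{eq:pauli_set1} realizes $\cC$, which can be checked directly by listing the anticommuting pairs. From the ground state of $H=\sum_{A\in\AAA}A$ (Appendix~\ref{app:seed}) we have $|\tr(A\varrho_{\min})|=a=(1+2\sqrt2)/7$ for every $A\in\AAA$. By Fact~\ref{item:fact-c} with vertex transitivity and $\alpha(\cC)=2$, we get $\chif(\cC)=7/2$.

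\textbf{Amplification.} Next I would verify that the amplified objects behave multiplicatively.
\begin{itemize}
\item By induction on $m$ using the generic construction~\eqref{eq:lex_operators}, $\AAA^{\lex m}$ realizes $\cC^{\lex m}$.
\item By Fact~\ref{item:fact-b}, $\chif(\AAA^m)=(7/2)^m$.
\item For the state $\varrho_m=\varrho_{\min}^{\ot K}$, I would group the qubits into blocks of three, one per copy of the seed system. Every element of $\AAA^m$ is then a tensor product over blocks, with exactly $m$ blocks carrying a nontrivial element of $\AAA$ (one per level of the product) and the identity elsewhere.
\item Since the state is a product over blocks, the expectation value factorizes, giving $|\tr(\varrho_m A)|=a^m$.
\end{itemize}
For the induction in the last two points, $\AAA^{\lex(m+1)}=\AAA^{\lex m}\lex\AAA$: the first register carries an element of $\AAA^{\lex m}$ and the second register places one seed element on its own 3-qubit block. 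Both the graph structure and the count of nontrivial blocks therefore propagate.

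\textbf{Conclusion.} Set $\epsilon_m=a^m\in(0,1)$. Then $\AAA^m\subseteq B_{\epsilon_m}(\varrho_m)$, so $G(\AAA^m)$ is an induced subgraph of $G(B_{\epsilon_m})$, and Fact~\ref{item:fact-a} gives
\[
\chif(B_{\epsilon_m})\,\epsilon_m^2\;\geq\;\Bigl(\tfrac72 a^2\Bigr)^m=\Bigl(\tfrac{9+4\sqrt2}{14}\Bigr)^m\longrightarrow\infty .
\]
This contradicts Eq.~\eqref{eq:conjecture} for every constant $C$. For the exponent in Theorem~\ref{thm:thmA}, write $\chif\geq(7/2)^m=\epsilon_m^{-\log(7/2)/\log(1/a)}$ and check numerically that $\log(7/2)/\log(1/a)\approx 2.076$.

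\textbf{Main obstacle.} The step I expect to need the most care is the seed claim $|\tr(A\varrho_{\min})|=a$ for all seven $A$. My approach would be to note that the automorphism (rotation) of $\cC$ need not be implemented by a Clifford symmetry of this particular realization, so equality of the expectations is not automatic. Instead I would argue that the ground energy of $H$ equals $-(1+2\sqrt2)$; this can be computed in the $8$-dimensional space or via the Lovász-type bound for the $\beta$ number. I would then show that the ground state is nondegenerate and invariant under a unitary that cyclically permutes $\AAA$ up to signs. Failing that, I would simply verify the seven values by explicit diagonalization, as Appendix~\ref{app:seed} presumably does. Signs are harmless, since only absolute values enter $B_\epsilon$ and the product of signs is still $\pm a^m$.
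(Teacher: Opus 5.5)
Your proposal is correct and follows the paper's proof essentially step for step: set $\epsilon_m=a^m$, observe $\AAA^m\subseteq B_{\epsilon_m}(\varrho_m)$, and combine monotonicity of $\chif$ under induced subgraphs with $\chif(\AAA^m)=(7/2)^m$ to get $\chif(B_{\epsilon_m})\,\epsilon_m^2\geq\bigl(\tfrac{9+4\sqrt2}{14}\bigr)^m$, with the seed values checked by direct computation as in Appendix~\ref{app:seed}; your block-by-block factorization of $\tr(\varrho_m A)$ also justifies $|\tr(\varrho_m A)|=a^m$ more clearly than the paper's one-line remark. One minor aside: the Lov\'asz bound only gives $|\lambda_{\min}|\leq\sqrt{7\,\vartheta(\cC)}\approx 3.843$, not the exact value $1+2\sqrt2\approx 3.828$, whereas your symmetry route does work here: the only $\mathbb{F}_2$-linear relation among the binary vectors of the seven Paulis is that all seven sum to zero, so the cyclic shift lifts to a Clifford which, after a Pauli sign correction, commutes with $H$.
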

\begin{proof}
Let $\AAA^m$ and $\varrho_m$ be as given in Eq.~\eqref{eq:amp_obs} and \eqref{eq:amp_state}.
Then consider the set
\begin{equation}
  B_{\epsilon_m}(\varrho_m)
  = \big\{P \in \Pauli_n\,:\, |\tr(\varrho_m P)|\geq \epsilon_{m} \big\},
\end{equation}
where $n$ is chosen large enough to host the state $\varrho_m$. From now on write $B_{\epsilon_m}$ for $B_{\epsilon_m}(\varrho_m)$.

Set $\epsilon_m = a^m$.
Clearly, the set $\AAA^m$ is a subset of $B_{\epsilon_m}$,
because every element in $\AAA^m$ has 
expectation value magnitude of exactly $\epsilon_m$.
It follows that  $G(\AAA^m)$ is an induced subgraph of $G(B_{\epsilon_m})$. 
By Fact~\ref{item:fact-a} on fractional chromatic numbers [Eq.~\eqref{eq:chi_monotone}],
one has $\chif(\AAA^m)  \leq \chif(B_{\epsilon_m}) $.
Then
\begin{align}
\label{eq:main_inequality_1}
    \chif(B_{\epsilon_m}) \cdot \epsilon_m^2
    \quad&\geq \quad
    \chif(\AAA^m) \cdot a^{2m} \nonumber\\
    &=\quad \Big(\frac{9+4\sqrt2}{14}\Big)^m \quad \approx\quad  (1.046918)^m \,.
\end{align}
Here we have used the value of $a = \frac{(1 + 2\sqrt{2})}{7}$ from Eq.~\eqref{eq:value_a} and the multiplicativity of $\chif$ [Eq.~\eqref{eq:lex_product_multi}].
Letting $m \longrightarrow \infty$, one can see that the right hand sied in above inequality is unbounded.

Thus there does not exist a finite constant $C$ such that $\chif(B_\epsilon(\varrho)) \epsilon^2 \leq C$ holds for all graphs, states, and errors. 

This ends the proof.
\end{proof}

\subsection{Scaling}
Let us now consider the scaling of the fractional chromatic number in the case of the anti-heptagon.

\begin{corollary}\label{cor:scaling}
    For lexicographic powers of the anti-heptagon, the fractional chromatic number of $B_{\epsilon_m}(\varrho_m)$ scales as 
    $ \Omega({\epsilon_m}^{-2.07598})$.
\end{corollary}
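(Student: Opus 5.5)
We convert the bound of Theorem~\ref{thm:A_details} into a power of $\epsilon_m$. By monotonicity under induced subgraphs (Fact~\ref{item:fact-a}) and multiplicativity under lexicographic products (Fact~\ref{item:fact-b}),
\begin{equation}
  \chif(B_{\epsilon_m}) \;\geq\; \chif(\AAA^m) \;=\; \chif(\cC)^m \;=\; \Big(\tfrac{7}{2}\Big)^m .
\end{equation}
Since $\epsilon_m = a^m$ with $0<a<1$, we can write $m = \log \epsilon_m / \log a$. Substituting this into $(7/2)^m$ gives
\begin{equation}
  \Big(\tfrac{7}{2}\Big)^m = \exp\!\Big( \tfrac{\log(7/2)}{\log a}\,\log\epsilon_m\Big) = \epsilon_m^{-\gamma}, \qquad \gamma := \frac{\log(7/2)}{\log(1/a)} .
\end{equation}
Hence $\chif(B_{\epsilon_m}) \geq \epsilon_m^{-\gamma}$ holds exactly for every $m$, with constant $1$. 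This establishes the $\Omega(\epsilon_m^{-\gamma})$ scaling along the sequence $\epsilon_m \to 0$.

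It remains to evaluate $\gamma$ numerically. With $a = (1+2\sqrt2)/7 \approx 0.546918$, we get $\log(1/a) \approx 0.603470$ and $\log(7/2) \approx 1.252763$. Their ratio is $\gamma \approx 2.07598$.

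As a consistency check, $\gamma > 2$ is equivalent to $(7/2)\,a^2 > 1$, which is exactly Eq.~\eqref{eq:geqone}. Indeed, $\epsilon_m^{-\gamma} = \epsilon_m^{-2}\,\big((7/2)a^2\big)^m$, so the excess exponent $\gamma-2$ is precisely $\log\!\big(\tfrac{9+4\sqrt2}{14}\big)/\log(1/a)$.

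The argument has no real obstacle. The only care needed is in the numerical evaluation, which must be precise enough to justify the stated digits. I would also remark that the statement concerns the discrete sequence $\epsilon_m$, so no interpolation to intermediate values of $\epsilon$ is needed.
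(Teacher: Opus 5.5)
Your proof is correct and is essentially the paper's argument: both rest on $\chif(B_{\epsilon_m})\geq(7/2)^m$, $\epsilon_m=a^m$ and $\gamma=\log(7/2)/\log(1/a)$, and your identity $(7/2)^m=\epsilon_m^{-\gamma}$ gives the $\Omega$ bound a little more directly than the paper's ``$\chif(\AAA)\,a^s>1$ for $s<\gamma$'' phrasing. There is one small numerical slip: $\log(1/a)\approx 0.603456$, not $0.603470$ (the latter would give $\gamma\approx 2.07593$), but your final value $\gamma\approx 2.07598$ is correct.
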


\begin{proof}
Note that the proof in Theorem~\ref{thm:A_details}
still provides a counterexample to 
$\chif$ scaling as $O(\epsilon^{-s})$, if
\begin{equation}
\chif(\AAA) a^s > 1\,,
\end{equation}
 due to the amplification occurring in Eq.~\eqref{eq:main_inequality_1}.
Taking a logarithm of the condition,
\begin{equation}
    \log(\chif(\AAA)) + s \log(a) > 0\,.
\end{equation}
Thus
\begin{equation}
    s < - 
    \frac{\log(\chif(\AAA))}
    {\log(a)}
    =
    - \frac{\log(\tfrac{7}{2})}
    {\log(a)} 
    \approx 2.07598\,.
\end{equation}
Thus necessarily, 
$\chif(B_{\epsilon_m})$ must scale at least as
$\Omega(\epsilon^{-2.07598})$.
\end{proof}

\section{Amplification through the $\beta$ number}
\label{sec:beta-number-amplification}

Let $G$ be realized as an anticommutation graph by a set of Pauli strings $\AAA$. Consider the $\beta$-number (or commutativity index),
\begin{equation}\label{eq:beta-def2}
 \beta(G) = \max_{\varrho}
 \sum_{A \in \AAA}\tr(\varrho A)^2\,.
\end{equation}
Here the maximization is taken over all quantum states $\varrho$.
The $\beta$ number
was introduced in Ref.~\cite{XSW24,hastings2022optimizing}
and since then appeared in many different contexts in quantum information theory \cite{ de2023uncertainty, moran2024uncertainty, Chen2022Exponential, XSW24, anschuetz2025strongly, Xu2025Simultaneous, munne2024sdp}.

Denote by $\alpha$ and $\vartheta$ the independence number and the Lovász number of a graph. It holds that~\cite{de2023uncertainty,hastings2022optimizing},
\begin{equation}\label{eq:sandwich_thm}
    \alpha(G) \leq \beta(G) \leq \vartheta(G)\,.
\end{equation}
Ref.~\cite{XSW24} showed that the left-hand side of the 
Eq.~\eqref{eq:sandwich_thm} is {\it not} tight. 
In particular, the anti-heptagon has $2 = \alpha(\cC) < \tfrac{9 + 4\sqrt{2}}{7} \leq \beta(\cC)$.

Now note that the proof in 
Section~\ref{Sec:proof_main}
uses the fact that the expectation values of all $A\in \AAA$ on $\varrho$ 
are both constant and sufficiently high. 
Thus it is tempting to think that a high $\beta$ number
alone will be sufficient for constructing a counterexample to Conjecture~\ref{conj:13}.

This is indeed the case.

\begin{restatable}{theorem}{theorem_beta}
[Amplification of $\beta$]\label{thm:general-amplification}
Let $G$ be a graph on $N>0$ vertices.
Then there exists a sequence of states $\sigma_{m}$ and
$\epsilon_{m}>0$, such that 
\begin{equation}\label{eq:bound1} 
    \chif(B_{\epsilon_{m}})\cdot \epsilon_{m}^{2} \quad \geq \quad \frac{0.99}{1+m\log N} \bigg( \frac{\beta(G)}{\alpha(G)} \bigg)^{m}\,.
\end{equation}
\end{restatable}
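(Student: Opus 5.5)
The plan is to amplify $G$ exactly as in Section~\ref{Sec:proof_main}. The difference is that I will not assume equal expectation values: I will pass from the sum of squared expectations (which is controlled by $\beta$) to a single threshold $\epsilon_m$ by an elementary partial-sum inequality.

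\textbf{Step 1 (construction).} Let $\AAA$ realize $G$ and take the realization $\AAA^{\lex m}$ of $G^{\lex m}$ from Eq.~\eqref{eq:lex_operators}. It has $N^m$ Pauli strings. By Fact~\ref{item:fact-b}, $\beta(G^{\lex m})=\beta(G)^m$. By definition~\eqref{eq:beta-def2}, and since the state space is compact, there is a state $\sigma_m$ with
\begin{equation*}
\sum_{A\in\AAA^{\lex m}}\tr(\sigma_m A)^2=\beta(G)^m.
\end{equation*}
If one prefers to avoid attainment, take a state achieving at least $0.99\,\beta(G)^m$; this is one natural source of the constant $0.99$ in Eq.~\eqref{eq:bound1}. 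Also by Fact~\ref{item:fact-b}, $\alpha(G^{\lex m})=\alpha(G)^m$.

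\textbf{Step 2 (partial-sum inequality).} Order the numbers $x_A=\tr(\sigma_m A)^2\in[0,1]$ decreasingly as $x_1\ge x_2\ge\dots\ge x_{M}$, where $M=N^m$. Let $S=\sum_i x_i$ and $\mu=\max_k k\,x_k$. Then $x_k\le \mu/k$ for every $k$, so
\begin{equation*}
S\le \mu\sum_{k=1}^{M}\frac1k\le \mu\,(1+\ln M)=\mu\,(1+m\ln N).
\end{equation*}
Hence there is an index $k$ with
\begin{equation*}
k\,x_k\ge \frac{S}{1+m\log N}.
\end{equation*}
This $x_k$ is strictly positive because $S\ge\beta(G)^m\ge\alpha(G)^m>0$ by Fact~\ref{item:fact-d}. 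Set $\epsilon_m=\sqrt{x_k}$.

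\textbf{Step 3 (coloring bound).} The $k$ observables $A_1,\dots,A_k$ all satisfy $|\tr(\sigma_m A_i)|\ge\epsilon_m$, so they lie in $B_{\epsilon_m}(\sigma_m)$. Their anticommutation graph $H$ is an induced subgraph of $G^{\lex m}$, and it is also an induced subgraph of $G(B_{\epsilon_m})$. Applying Fact~\ref{item:fact-a} twice and Fact~\ref{item:fact-c} once gives
\begin{equation*}
\chif(B_{\epsilon_m})\ge\chif(H)\ge \frac{k}{\alpha(H)}\ge\frac{k}{\alpha(G)^m}.
\end{equation*}
Multiplying by $\epsilon_m^2=x_k$ and inserting the bound from Step~2 yields
\begin{equation*}
\chif(B_{\epsilon_m})\,\epsilon_m^2\ge\frac{S}{(1+m\log N)\,\alpha(G)^m}\ge \frac{0.99}{1+m\log N}\Big(\frac{\beta(G)}{\alpha(G)}\Big)^m.
\end{equation*}
The slack $0.99$ also absorbs any convention for the base of the logarithm, or an approximately optimal state.

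\textbf{Main obstacle.} The step needing care is Step~2. One must extract a single threshold from a possibly very non-uniform profile of expectation values while losing only a factor polynomial in $m$, not exponential. The harmonic-sum argument does exactly this: the loss is only logarithmic in the number of vertices, hence linear in $m$. This is why the exponential gain $(\beta/\alpha)^m$ still dominates whenever $\alpha(G)<\beta(G)$, which is what Theorem~\ref{thm:thmB} needs. A secondary point is that $\beta$ must be evaluated on the realization $\AAA^{\lex m}$ itself; this is justified by the realization-independence of $\beta$ together with its multiplicativity from~\cite{XSW24}.
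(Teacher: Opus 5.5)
Your proof is correct and follows the paper's argument essentially step for step. Both use the lex-power realization $\AAA^{\lex m}$ with a $\beta$-maximizing state, pick the threshold from $\max_k k\,x_k$ with the harmonic bound $h_{N^m}\le 1+m\log N$, and then apply $\chif\ge |V|/\alpha$ to the induced subgraph of the top $k$ observables together with $\alpha(G^{\lex m})=\alpha(G)^m$.

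The only difference is your choice $\epsilon_m=\sqrt{x_k}$. It gives the bound even without the factor $0.99$, but it can equal $1$ (e.g.\ for edgeless $G$, where all $x_i=1$). The paper instead takes $\epsilon_m=\sqrt{0.99\,t_m}$ precisely to force $\epsilon_m<1$, as Conjecture~\ref{conj:13} requires. That is the actual role of the $0.99$; it could not absorb a change of logarithm base such as base $10$ in any case.
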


Note that if $\alpha(G)<\beta(G)$, 
then the expression on the right-hand side of the Eq.~\eqref{eq:bound1} is unbounded as $m\rightarrow \infty$:
the denominator $1+ m\log N $ grows linearly while $\big(\tfrac{\beta(G)}{\alpha(G)}\big)^m$ grows exponentially. 
Thus, there does not exist a constant $C$ for which $\chif \cdot \epsilon^{2}\leq C$. 

\bigskip
In particular, graphs with $\alpha(G)<\beta(G)$ exist~\cite{XSW24}.
It follows that:

\thmBinformal*

We state the proof of Theorem~\ref{thm:general-amplification} 
in Section~\ref{subs:beta_theorem_proof}
and for now proceed with a rather standard auxiliary Lemma.

\subsection{Threshold Lemma}
Given a high value of $\beta$, 
we need to select those observables in $\AAA$ that show a sufficiently high squared expectation value.
While the following proof is standard
we reproduce it here for the reader.
For this denote by $h_{N}$ 
a partial sum of the harmonic series, 
\begin{equation}\label{eq:part_harmonic}
    h_{N} = \sum_{k=1}^{N}\frac{1}{k}\,.
\end{equation}

\begin{lemma}
\label{lem:threshold-selection}
Let $Y = (y_i)_{i=1}^N$ be a  non-increasing sequence of non-negative numbers, not all equal to zero,
\begin{equation}
  Y =   \big(y_{1},\ldots, y_{N} \big) \,, \qquad \, y_{1}\geq \ldots \geq y_{N} \geq 0 \,.
\end{equation}
Denote by 
$    Y_{t} = \Big\{y_i \in Y\,|\, y_i \geq t\Big\}$
the set of elements from $Y$ that are greater or equal than $t$.
Then there exists a $0<t\leq y_1$, such that 
\begin{equation}\label{eq:lemma_ineq}
   t\cdot \big|Y_{t}\big| \quad\geq\quad  \frac{1}{h_{N}} \cdot\,\sum_{i=1}^{N}y_{i} \,.
\end{equation}
\end{lemma}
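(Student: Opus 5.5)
We argue by contradiction, restricting attention to the thresholds $t = y_k$ for $k=1,\dots,N$. These are the natural candidates, since $t\cdot|Y_t|$ only increases as $t$ rises to the next value in the sequence. We treat $Y_t$ as a multiset, so that repeated values are counted with multiplicity. Because the sequence is non-increasing, for $t=y_k$ we have $|Y_{y_k}|\geq k$: indeed $y_1,\dots,y_k$ all satisfy $y_i\geq y_k$. Hence
\begin{equation*}
  y_k\cdot |Y_{y_k}| \;\geq\; k\, y_k\,.
\end{equation*}

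Write $S=\sum_{i=1}^N y_i>0$. Suppose the claim fails for every admissible choice $t=y_k$ with $y_k>0$, that is, suppose $k\,y_k < S/h_N$ for all such $k$. For indices with $y_k=0$ the inequality $k y_k < S/h_N$ holds trivially, because $S/h_N>0$. So in either case $y_k < \frac{S}{h_N}\cdot\frac1k$ for every $k$. Summing over $k=1,\dots,N$ gives
\begin{equation*}
  S \;=\; \sum_{k=1}^N y_k \;<\; \frac{S}{h_N}\sum_{k=1}^N \frac1k \;=\; S\,,
\end{equation*}
which is a contradiction. Note that the strict inequality survives the summation only because at least one index contributes strictly, and in fact every index does, including the zero terms.

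It follows that some $k$ with $y_k>0$ satisfies $k\,y_k\geq S/h_N$. Taking $t=y_k$ then gives $0<t\leq y_1$ and $t\,|Y_t|\geq k\,y_k\geq S/h_N$, which is exactly Eq.~\eqref{eq:lemma_ineq}.

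The only delicate point is making sure the chosen $t$ is strictly positive. This is why the zero entries are handled separately: for them the bound $k y_k < S/h_N$ holds automatically, so they cannot be the indices that break the contradiction. The hypothesis that not all $y_i$ vanish is what guarantees $S>0$, and with it the strictness needed above.
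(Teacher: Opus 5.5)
Your proof is correct and is essentially the paper's argument in contrapositive form: the paper picks $\ell^*$ maximizing $\ell\, y_\ell$, sums $y_i \le M/i$ to get $h_N M \ge \sum_i y_i$, and takes $t = y_{\ell^*}$, which is exactly your summation step. Your explicit multiset reading of $Y_t$ (needed so that $|Y_t|$ counts indices, as in the later application) and your check that $t>0$ make precise two points the paper leaves implicit.
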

\begin{proof}
Consider
\begin{equation}
    M = \max_{1 \leq \ell \leq N}\quad \ell\, y_{\ell} \,.
\end{equation} 
Denote by $\ell^{*}$ the index such that $M=\ell^{*}y_{\ell^{*}}$~\footnote{While $\ell^*$ might not be unique, this will not matter for the proof. Pick any.}.
By definition, for any $1\leq i \leq N$  it holds that 
\begin{equation}\label{eq:lemma_basic_ineq}
    \frac{1}{i}M \geq y_i \,.
\end{equation}
Summing Eq.~\eqref{eq:lemma_basic_ineq} over the values of $Y$
and using the partial harmonic series [Eq.~\eqref{eq:part_harmonic}],
\begin{equation}
    h_{N}\, M \geq \sum_{i=1}^{N}y_{i}\,.
\end{equation}
Now set $t = y_{\ell^{*}}$.
Since the sequence $Y$ is non-increasing, 
the number of its elements that are greater or equal $y_{\ell^{*}}$ is at least $\ell^{*}$, that is, $|Y_{t}| \geq \ell^{*}$.
With $M = \ell^{*}y_{\ell^{*}}$ we have
\begin{equation}
    t \, |Y_{t}|  \geq \ell^{*}y_{\ell^{*}}
    \geq \frac{1}{h_{N}} \sum_{i=1}^{N}y_{i} \,.
\end{equation}

Last we show that $t\leq y_1$. To see this, assume otherwise $t > y_1$. Then $\big|Y_{t}\big| = 0$ and Ineq.~\eqref{eq:lemma_ineq} cannot hold, leading to a contradiction. Thus $t\leq y_1$.

This ends the proof.
\end{proof}
Clearly, any sequence of non-negative numbers can be sorted so that 
Lemma~\ref{lem:threshold-selection} applies.

\subsection{Proof of Theorem~\ref{thm:general-amplification}}
\label{subs:beta_theorem_proof}

We proceed with the proof.

\begin{proof}
Given a graph $G$, let $\AAA$ be its Pauli realization.
Then by the construction in Section~\ref{sec:lex_prod_obs},
$\AAA^m = \AAA^{\lex m}$ is a Pauli realization of $G^{\lex m}$.
Recall that if $\AAA$ has $N$ elements, then $\AAA^m$ contains $N^m$ elements.

Now let $\sigma$ be the state that achieves the value $\beta(G)$.
Likewise, let $\sigma_m$ be the state that 
achieves the value $\beta(G^{\lex m}) = \beta(G)^m$.
Here we have used that the $\beta$ number is multiplicative under the lexicographic graph product [Eq.~\eqref{eq:lex_product_multi}].
Note that given $\sigma$
it is not obvious how to construct $\sigma_m$ explicitly, but this will not matter for the proof.

From now on, fix $m$ and consider the sequence 
\begin{equation}
    y_A = 
    |\tr(A \sigma_m)|^2\,,\quad  A \in \AAA^m\,,
\end{equation}
Order the set $\{y_A\}_{A \in \AAA^m}$ non-increasingly, 
\begin{equation}
    Y = \big( y_1, \dots, y_{N^m} \big)\,, \quad \quad  y_1 \geq \dots \geq y_{N^m} \geq 0\,.
\end{equation}
Note that $Y$ contains at least one non-zero element 
due to the fact that $\sigma_m$ is the maximizer 
of $\beta(G^{\lex m})$, it can only vanish for empty graphs.

We can then invoke Lemma~\ref{lem:threshold-selection}: 
For this, recall that 
$    Y_{t} = \Big\{y_i \in Y\,|\, y_i \geq t\Big\}$
is the set of elements from $Y$ that are greater or equal $t$.
The Lemma states that there exists $0< t \leq y_1$, 
such that
\begin{equation}
    t\cdot \big|Y_{t}\big| \quad\geq\quad  \frac{1}{h_{N^m}} \cdot\,\sum_{i=1}^{N^m}y_{i} \,.
\end{equation}
For given $m$, denote it as $t_m$.
Rewriting the right hand side in terms of expectation values
and using the fact that all $y_A$ arise from measuring observables of $\AAA^m$ on $\sigma_m$, 
this reads
\begin{equation}\label{eq:threshold_A}
    t_m\cdot \big|Y_{t_m}\big| \quad\geq\quad  
    \frac{1}{h_{N^m}} \cdot\,\beta(G^{\lex m}) \,.
\end{equation}
Now set $\epsilon_m =  \sqrt{0.99 \,t_{m}}$. 
Then $\epsilon_m^{2} < t_m \leq \max_{A \in \AAA^m} 
y_A \leq 1$, as all $A$ are Pauli strings.
This guarantees $\epsilon_{m}< 1$, 
thereby satisfying  the requirement in Conjecture~\ref{conj:13}.

Consider a set of observables from $\AAA^{m}$ and some threshold $\delta$. Define
\begin{equation}
    \AAA^{m}_{\delta} = \{ A \in \AAA^{m} \,:\, |\tr(A\sigma_{m})| \geq \delta \}\,.
\end{equation}
Recall the analogously defined quantity
\begin{equation}
    B_{\delta} = 
    \big\{
    A \in \Pauli \,:\, 
    |\tr(A \sigma_m)| \geq \delta
    \big\}\,,
\end{equation}
where $\Pauli$ is the set of all Pauli strings 
on the space of $\sigma_m$. 

Let us observe two relevant graph inclusions:
First, by $\epsilon_m \leq \sqrt{t_m}$
and $\AAA^m \subseteq \Pauli$,
one gets the graphs inclusions,
\begin{align}\label{eq:sets_inclusion_sequence}
    \AAA^m_{ \sqrt{t_m}} 
    \quad\subseteq\quad
    \AAA^{m}_{ \epsilon_m} 
    \quad\subseteq\quad
    B_{\epsilon_m}\,.
\end{align}
For $H$ an induced subgraph of $G$ write 
$H \subseteq_i G$.
From Eq.~\eqref{eq:sets_inclusion_sequence} then follows
\begin{equation}
\label{eq:graph_inlcusion1}
G(\AAA^m_{\sqrt{t_m}}) \quad \subseteq_i \quad 
G(\AAA_{\epsilon_m}^m)    \quad \subseteq_i \quad 
G(B_{\epsilon_m})    \,.
\end{equation}
Second, from $\AAA^m_{\sqrt{t_m}} \subseteq \AAA^m$ follows
\begin{equation}
\label{eq:graph_inclusion2}
    G(\AAA^{m}_{\sqrt{ t_{m}}})\quad \subseteq_i \quad
    G(\AAA^m) \quad = \quad G^{\lex m}\,.
\end{equation}

Recalling that $Y$ is the set of squared expectation values 
$|\tr(A \sigma_m)|^2$,

\begin{equation} \label{eq:sets_sequence_card}
    \big|Y_{t_m}\big| = |\AAA^m_{\sqrt{ t_m}}\big|
    = 
    \big|V(\AAA^m_{\sqrt {t_m}})\big|\,.
\end{equation}

Let us restate the necessary graph facts from Section~\ref{sec:chromatic_number}:
\begin{enumerate}[label=(\alph*)]\setlength{\itemsep}{2pt}
\item 
Let $H$ be an induced
subgraph of $G$. 
Then 
$\alpha(H)\leq\alpha(G)$
and
$\chif(H)\leq\chif(G)$.

\item 
For all $G$, $H$: $\alpha(G \lex H) = \alpha(G) \cdot \alpha(H)$ 
and $\beta(G \lex H) = \beta(G) \cdot \beta(H)$.

\item For all $G$, $\chif(G) \geq \frac{|V(G)|}{\alpha(G)}$.

\end{enumerate}

We are ready for the final chain of inequalities.
\begin{align}
    \label{eq:beta_sequence}
     \chif(B_{\epsilon_{m}}) \cdot \epsilon_{m}^{2} 
     \quad &\geq\quad  
     \chif(\AAA^{m}_{\sqrt{ t_m}}) \cdot  \epsilon_{m}^{2} 
     && \text{[Fact (a) for $\chif$ and Eq.~\eqref{eq:graph_inlcusion1}]}\nonumber \\
     &\geq \quad
     \frac{|V(\AAA^{m}_{\sqrt{ t_m}})|}{\alpha(\AAA^{m}_{\sqrt{t_m}})} \cdot (0.99 t_m) && \text{[Fact (c)]} \nonumber\\
     &= \quad
     0.99 \cdot 
     \frac{|Y_{t_{m}}|\cdot t_m }{\alpha(\AAA^{m}_{\sqrt{t_m}})} && \text{[Eq.~\eqref{eq:sets_sequence_card}]} \nonumber\\
     &\geq \quad 
     \frac{ 0.99}{h_{N^m}} \cdot
     \frac{\beta(G^{\lex m})}{\alpha(\AAA^m_{\sqrt{ t_m}})} 
     &&\text{[Eq.~\eqref{eq:threshold_A}]} \nonumber\\
     &\geq \quad 
     \frac{ 0.99}{h_{N^m}} \cdot
     \frac{\beta(G^{\lex m})}{\alpha(G^{\lex m})}
     &&\text{[Fact (a) for $\alpha$ and Eq.~\eqref{eq:graph_inclusion2}]} \nonumber\\
     &= \quad 
     \frac{ 0.99}{h_{N^m}} \cdot
     \bigg( \frac{\beta(G)}{\alpha(G)} \bigg)^m\,.
     &&\text{[Fact (b) for $\alpha$, $\beta$]}
\end{align}

 A standard integral bound on the partial sum of a harmonic sequence is
 (see e.g. the derivation of the Integral Test in~\cite[Theorem 9 in Section~10.3]{weir_thomas_2014})
    \begin{equation}
        h_{N^{m}} = \sum_{k=1}^{N^m} \frac{1}{k} 
        \leq 1+ \int_{1}^{N^{m}}\frac{dx}{x} =  1+ m\log N\,.
    \end{equation}
    One can then lower bound the expression on the left hand side of Eq.~\eqref{eq:beta_sequence} by:
    \begin{equation}
\label{eq:bound}
    \chif(B_{\epsilon_{m}})\cdot \epsilon_{m}^{2} \quad \geq \quad 
    \frac{0.99}{1+m\log N} 
    \bigg( \frac{\beta(G)}{\alpha(G)} \bigg)^{m}\,.
    \end{equation}

This ends the proof of Theorem~\ref{thm:general-amplification}. 

\end{proof}

\section{Conclusion}

We have shown that any graph with $\alpha < \beta$
provides a counterexample to Conjecture~\ref{conj:13}.
The key idea is that $(\tfrac{\beta}{\alpha})$
can be amplified under the lexicographic graph product so that
$\chif(B_\epsilon) \cdot \epsilon^2$ is 
unbounded.

We note that, while Conjecture~\ref{conj:13} does not hold, 
it does not rule out the existence of a triply efficient shadow tomography protocol for arbitrary Pauli sets,
for example a sample-efficient protocol with $\chif$ scaling as 
$\operatorname{poly}(1/\epsilon)$.
In particular, Conjecture~\ref{conj:13} 
is stronger than required for sample-efficient tomography alone: 
{\em any} upper bound on $\chif$ of the form $O(\epsilon^{-\kappa})$ yields a sample-efficient two-copy Clifford measurement protocol
for any set of Paulis. Thus, a natural question is:

\begin{question}
\label{conj:13p}
Let $\varrho$ be an $n$-qubit state, $\epsilon \in (0, 1)$, and let $B_\epsilon \subseteq \Pauli_n$ be the set of all Paulis $P$ such that $| \tr(\varrho P )| \geq \epsilon$. 
Does there exist 
a number $\kappa > 0$,
so that 
every  
anticommutation graph $G(B_\epsilon)$ 
has 
a
fractional coloring 
of size $O(\epsilon^{-\kappa})$?
\end{question}

We leave this question open.

\appendix
\section{An explicit seed eigenvector}\label{app:seed}
Let us recall that the state which was used to construct a counterexample to the Conjecture~\ref{conj:13} in Section~\ref{Sec:proof_main}, was chosen as an eigenstate of the Hamiltonian $H = \sum_{A\in \AAA} A$\,,
where $\AAA$ is a set of Pauli observables from Eq.~\eqref{eq:pauli_set1}.  By direct calculation, 
one can verify that the smallest eigenvalue of $H$ equals
\begin{equation}
    \lambda_{\min} = -(1+2\sqrt{2})\,.
\end{equation}
The corresponding ground state
is 
 \begin{equation}\label{eq:seed-vector}
|\psi_{\text{min}}\rangle
=\frac{1}{2\sqrt{3+\sqrt{2}}}\begin{bmatrix}
\frac{1}{\sqrt2}(-1+i)\\
-1\\
\tfrac{1}{\sqrt2}(1+i)\\
-i\\
1+\sqrt2\\
\tfrac{1}{\sqrt2}(1+\sqrt2)(1-i)\\
-1\\
\tfrac{1}{\sqrt2} (-1+i)
\end{bmatrix}\,.
\end{equation}

\bibliographystyle{unsrt}
\bibliography{references}

\end{document}